\newtheorem{prop}{Proposition}
\newtheorem{definition}{Definition}
\newtheorem{theorem}{Theorem}
\newtheorem{problem}{Problem}
\title{Local Diversity of Condorcet Domains}
\author[1,2]{Alexander Karpov\footnote{Authors are  listed in alphabetical order}} 
\author[3]{Klas Markstr{\"o}m} 
\author[4]{S{\o}ren Riis}
\author[4]{Bei Zhou}
\affil[1]{HSE University, Moscow, Russia}
\affil[2]{Institute of Control Sciences, Russian Academy of Sciences, Moscow, Russia}
\affil[3]{Ume\aa\ University} 
\affil[4]{Queen Mary University of London}
\date{}
\begin{document}
\maketitle

\begin{abstract}
Several of the classical results in social choice theory demonstrate that in order for many voting systems to be well-behaved the set domain of individual preferences must satisfy some kind of restriction, such as being single-peaked on a political axis.  As a consequence it becomes interesting to measure how diverse the preferences in a well-behaved domain can be.

In this paper we introduce an egalitarian approach to measuring preference diversity, focusing on the abundance of distinct suborders one subsets of the alternative. We provide a common generalisation of the frequently used concepts of ampleness and copiousness. 

We give a detailed investigation of the abundance for Condorcet domains. Our theorems imply a ceiling for the local diversity in domains on large sets of alternatives, which show that in this measure Black's single-peaked domain is in fact optimal.   We also demonstrate that for some numbers of alternatives, there are Condorcet domains which have largest local diversity without having maximum order.

\end{abstract}

\section{Introduction}

In many situations studied in social choice a group of agents, which may be voters in a political system or artificial agents in a computer environment,  need to reach a common decision based on their individual preferences.  In situations where the individual preferences are completely unrestricted the now classical theorem of  \cite{arrow1950difficulty} shows that all decision procedures fail to satisfy several desirable properties.  Similarly the Gibbard-Satterwhaite theorem \cite{Gibbard,SATTERTHWAITE1975187} demonstrates the only social choice procedure which is proof against strategic manipulation is dictatorial.  On the other hand \cite{black}  showed that when preferences are based on preferences which are single-peaked on a common political axis, majority voting will be both decisive and strategyproof. These contrasting results lead to the question of how diverse a collection of individual preferences can be while also having a well-behaved social choice procedure. One of the most studied classes of well-behaved preferences is the Condorcet domain. A Condorcet domain is a set of linear orders such that the majority voting decision is transitive when each voter chooses their preferences from this domain. Condorcet domains secure strategyproofness in both deterministic \citep{Campbell} and randomized environments \citep{Brandt2023}. 

One of the most obvious measures for the diversity of opinions in a Condorcet domain is the size of the domain.  This is a well-studied  problem in social choice theory \cite{kim1992overview,puppe2023maximal}. Recent studies \citep{karpovslinko,setalternating} improved the asymptotic lower bound for the size of the largest Condorcet domains. Up to seven alternatives, all maximal Condorcet domains are investigated  in \cite{akello2023condorcet}, and in \cite{Leedham-Green2023condorcet} the largest domains on eight alternatives were determined. For higher numbers of alternatives the maximum possible size of a Condorcet domain remains an open problem.

However, the size is only the simplest measure of diversity and domains of the same size may have very different structures.  With this in mind, other types of diversity measures have been studied.  
\cite{puppe2018single} justified the single-peaked domain of preferences by requiring maximal diversity of the top elements. A more general measure of diversity is the support-based preference diversity index \cite{Hashemi2014} that evaluates diversity by the total number of distinct suborders of a given size. It reflects the idea of utilitarianism. It has substitutability, i.e. low diversity of preferences within one subset of alternatives can be compensated by high diversity within another subset of alternatives. By measuring the lowest number of distinct suborders over all subsets of a given size, we implement an egalitarian (Rawlsian) approach to preference diversity measurement. Note that this is egalitarian with respect to alternatives (subsets of alternatives), not voters/agents.

Recent literature has introduced two local diversity conditions for domains:  \emph{ampleness} and \emph{copiousness}, see \cite{slinko2019condorcet, puppe2023maximal}. The first means that the restriction of the domain to any pair of alternatives contains both of the possible orders, and the latter means that the restriction to any triple contains four orders. In this paper we generalize these notions by introducing a new diversity measure called abundance. Here we say that a Condorcet domain that contains at least $s$ suborders for each $k$-elements subset of alternatives is more diverse than a Condorcet domain that contains at least $s'$ suborders for each $k$-elements subset of alternatives if we have $s>s'$.

Some Condorcet domains have very low local diversity, e.g. non-ample Condorcet domains \cite{akello2023condorcet}. On the other hand, there are Condorcet domains with at least $2^{k-1}$ distinct suborders for each $k$-elements subset. Single-peaked, single-dipped, and group-separable domains are examples of such domains. Utilizing  Ramsey's theorem, we show that for sufficiently big numbers of alternatives $n$ and a fixed $k$, $2^{k-1}$ distinct suborders is the highest possible local diversity. This fact is an additional justification for the classical domains such as the single-peaked, and group-separable domains. It is possible to find significantly larger Condorcet domains than the single-peaked domain (see e.g. set-alternating scheme \cite{setalternating}), but their local diversity properties cannot be strictly better.

For a small number of alternatives, the Condorcet domain with the highest number of orders also has the best local diversity properties. Starting with $n=7$, there are maximal Condorcet domains that do not have the highest number of orders but have better local diversity parameters than the Condorcet domains of the maximal size. The study of local diversity properties thus opens a new line of research on large Condorcet domains.

In this paper we focus our discussion on Condorcet domains, however the concept of abundance applies to general domains and we will in our experimental analysis also include less restricted domains.

The structure of the paper is organised as follows: Section \ref{sec:background} introduces the notation and definitions employed throughout the paper. Section \ref{sec:local} defines the abundance of a domain and presents a series of theorems about abundance. Section \ref{sec:maximality} discusses the maximality of Condorcet domains, used in some of the later computational work. Section \ref{sec:ramsey} establishes asymptotic upper bounds for abundance via a Ramsey-type property for Condorcet domains. Section \ref{sec:abundance}  discusses abundance in terms of diversity indices and diversity orders for profiles. It follows this by an experimental study that compares the abundance of profiles randomly sampled from various domain types, including domains arising from empirical data. Finally, section \ref{sec:conclusion} gives some conclusions.

\section{Background and Notation}
\label{sec:background}
Let a finite set  $X=[n]=\{1, \dots, n\}$ be the set of alternatives. Let $L(X)$ be the set of all linear orders over $X$.  Each agent $i \in N$ has a preference order $P_i$ over $X$ (each preference order is a linear order). For brevity, we will write preference orders as strings, e.g. $12 \dots n$ means that $1$ is the best alternative, $n$ is the worst.

A subset of preference orders $D\subseteq L(X)$ is called a \emph{domain} of preference orders. A domain $D$ is a \emph{Condorcet domain} if whenever the preferences of all agents belong to the domain, the majority relation of any preference profile with an odd number of agents is transitive. A Condorcet domain $D$ is \emph{maximal} if every Condorcet domain $D'\supset D$ (on the same set of alternatives) coincides with $D$. A Condorcet domain $D$ is \emph{unitary} if it contains order $12\ldots n$. Each Condorcet domain can be rearranged to a unitary form by renaming alternatives. We will consider mainly unitary Condorcet domains.

Two Condorcet domains are isomorphic if they differ only by a relabeling of the alternatives.

The \emph{restriction} of a domain $D$ to a subset $A\subset X$ is the set of linear orders from $L(A)$ obtained by restricting each linear order from $D$ to $A$.

\cite{Sen1966} proved that  a domain is a Condorcet domain if the restriction of the domain to any triple of alternatives $(a,b,c)$ satisfies a never condition. A never condition can be of three forms $xNb$, $xNm$ $xNt$, referred to as a never bottom, a never middle, and a never top condition respectively. Here $x$ is an alternative from the   triple and  $xnb$, $xNm$, and $xNt$ means that $x$ is not ranked last, second, or  first respectively in the restricted domain. Fishburn noted that for domains with a societal axis never conditions can instead be described as  $iNj$, $i,j\in [3]$. $iNj$ means that $i^{th}$ alternative from the triple according to societal axis does not fall in $j^{th}$ place within this triple in any order from the domain. Unitary domains have at most six types of never conditions in Fishburn's format. 

A \emph{profile}  is a vector $\mathbf{R}=(R_1,R_2,\ldots,R_N)$ such that $R_i$ is the preferences of voter/agent $i$.  The \emph{support} $Supp(\mathbf{R})$ of a profile is the set of distinct preference orders occurring in $\mathbf{R}$. A profile is unanimous if the support has size 1. The \emph{census} of $\mathbf{R}$ is the vector $C(\mathbf{R})=((f_1,R_1),(f_2,R_2),\ldots)$ with one pair $(f_i,R_i)$ for each $R_i\in Supp(\mathbf{R})$ where $f_i$ is the number of agents with preference $R_i$.

From \cite{Hashemi2014} we have the following definitions.
A \emph{preference diversity order} is a partial order $\preceq $ on the set of profiles such that $\mathbf{R}_1\preceq \mathbf{R}_2$ if $\mathbf{R}_1$ is unanimous. 
A \emph{preference diversity index} is a function $\Delta$ from the set of profiles to the non-negative real numbers such that $\Delta(\mathbf{R})=0$ if $\mathbf{R}$ is unanimous. 

A diversity index $\Delta$ also defines a natural  diversity order $\preceq_\Delta$ by saying that $\mathbf{R}_1\preceq_\Delta  \mathbf{R}_2$ if $\Delta(\mathbf{R}_1)\leq \Delta(\mathbf{R}_2)$.

A Condorcet domain is said to be \emph{ample} if restriction of the domain to any pair of alternatives contains both of the possible orders, and \emph{copious} if the restriction to any triple contains four orders.

A domain which satisfies a never condition of the form $xN3$ for every triple is called an \emph{Arrow's single-peaked domain} \cite{arrow63}.
\emph{Black’s single-peaked domain} is the maximal Arrow’s single-peaked domain isomorphic to the domain defined by using the never condition  $2N3$ on all triples \cite{black}. 
A domain is a \emph{single-crossing domain} if there is an ordering for the set of preference orders in the domain such that for any pair of alternatives $(i, j)$ there exists a unique $k$ so that their relative order is $(i, j)$ for the top $k$ preference orders and $(j, i)$ for the rest \cite{Slinko2021}.   Fishburn's alternating domains are defined by assigning a triple $(i,j,k)$ the never condition $2N1$ if $j$ is odd and $2N3$ if $j$ is even \cite{fishburn1997acyclic}. \cite{setalternating} introduced the \emph{set-alternating} domains. Given a subset $A$ of the alternatives  the corresponding set-alternating  domain has the never condition $1N3$ on each triple $(i,j,k)$ where $j\in A$, and $3N1$ on the other triples.

Ramsey's theorem \cite{ramsey} asserts that given positive integers $k,r,n$ there is a positive integer $R(k, r, n)$ such that if $N\geq R(k,r,n)$, and each subset of size $r$ from $[N]$ is assigned a label from $[k]$, then all $r$-subsets from some subset of $[N]$ of size $n$ has the same label.

\subsection{Code and data}

For our computational results we have used the CDL library \cite{zhou2023cdl}, which provides functions for the study of Condorcet domains and other types of restricted preference orders.  This library includes a function which computes the restriction of a domain to a subset of the alternatives.

The programs and data used to derive the results in this paper are publicly available\footnote{GitHub. \url{https://github.com/sagebei/Local_Diversity_of_Condorcet_Domains.git}}.  

Whenever possible we have also verified our results by using independently coded functions in Mathematica.

\section{Local diversity conditions for Condorcet domains}
\label{sec:local}
There are two kinds of local conditions on domains that have been much used in recent literature, \emph{ampleness} and \emph{copiousness}. The first means that the restriction of the domain to any pair of alternatives will contain both of the possible orders, and the latter that the restriction to any triple will see the maximum possible four orders.

One way of looking at these conditions is that they require a minimum diversity of orders on sets of size 2 and 3 respectively. This point of view can be naturally generalised as follows.
\begin{definition}
    A domain $D$ is $(k,s)$-abundant  if the restriction of $D$ to any subset of $k$ alternatives has size at least $s$.

    A domain is \emph{exactly} $(k,s)$-abundant if it is $(k,s)$-abundant but not $(k,s+1)$-abundant. 
\end{definition}
Being $(3,4)$-abundant is equivalent to being copious and $(2,2)$-abundant is equivalent to being ample. Here we also note that being $(k,s)$-abundant is a monotone-increasing property under the addition of new orders to the domain; that is, for a given $k$  the domain $C\cup\{\sigma\}$ always has at least as large abundance as the domain $C$. This in turn implies that the largest possible abundance can always be found among the maximal Condorcet domains, but it can also occur already in non-maximal Condorcet domains. 

Let us look at the abundance properties of some classes of Condorcet domains.
\begin{theorem}
{\ }
    \begin{enumerate}
        \item The maximal Black's and Arrow's single-peaked domains on $n$ alternatives are exactly $(k,2^{k-1})$-abundant for every $k\leq n$. 
        
        \item Set-alternating domains \cite{setalternating} are $(k,2^{k-1})$-abundant for every $k\leq n$.   If $k<n/2$ then they are exactly $(k,2^{k-1})$-abundant. 

        \item Fishburn's alternating domains are exactly $(k,2^{k-1})$-abundant if $k\leq \lceil n/2\rceil+1$.
        
        For $n=4,5$ they are $(4,s)$-abundant for $s=9,8$.

        For $n=5,6,7$ they are $(5,s)$-abundant for $s=20,19,16$.
        
        For $n=6,7,8,9$ they are $(6,s)$-abundant for $s=45,42,39,32$.

        For $n=7,8,9,10,11$ they are $(7,s)$-abundant for $s=100,96$, $86, 79, 64$.
        
        \item The maximum Condorcet domain for $n=8$ is $(4,8),(5,18),(6,40),$ and $(7,96)$-abundant.
    \end{enumerate}
\end{theorem}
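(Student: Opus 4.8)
The plan is to treat this as a finite and fully explicit computation, since the statement concerns one concrete object: the maximum Condorcet domain on eight alternatives, which was determined in \cite{Leedham-Green2023condorcet} and has $224$ orders. First I would load this domain (or reconstruct it from its defining never-conditions) and apply the restriction function of the \texttt{CDL} library \cite{zhou2023cdl}, which projects each order onto a prescribed subset $A\subset[8]$ and returns the set of distinct suborders so obtained.

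For each target size $k\in\{4,5,6,7\}$ I would enumerate all $\binom{8}{k}$ subsets of $[8]$---that is $70,56,28$, and $8$ subsets respectively, $162$ in total---compute the restriction of the domain to each, and record the cardinality of each restricted set. By definition the abundance at level $k$ is the minimum of these cardinalities, so to prove the domain is $(k,s)$-abundant I must check that \emph{every} $k$-subset yields at least $s$ suborders; the specific numbers $8,18,40,96$ are the observed minima, and I would additionally exhibit, for each $k$, a witnessing subset attaining the minimum exactly, to confirm the value cannot be raised. A useful efficiency reduction is that abundance is invariant under the automorphism group of the domain acting on subsets, so it suffices to examine one representative from each orbit of $k$-subsets; this can shrink the $162$ restrictions to a much smaller list.

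Two points require care. First, if the maximum Condorcet domain on eight alternatives is not unique up to isomorphism, then since abundance is an isomorphism invariant its value could in principle differ between non-isomorphic maxima; I would therefore either fix a specific representative matching the statement or run the computation over the full list of size-$224$ domains. Second, and this is the real obstacle, the result rests entirely on the correctness of the data for the $224$ orders and of the restriction routine. To guard against implementation error I would cross-verify the counts with the independently coded Mathematica functions referred to earlier, and sanity-check the outputs against external constraints: each $7$-subset restriction is itself a Condorcet domain on $7$ alternatives and hence has at most $100$ orders, consistent with the reported value $96$, while the values $18,40,96$ lying strictly above the asymptotic ceiling $2^{k-1}=16,32,64$ is exactly the small-$n$ phenomenon anticipated in the introduction.
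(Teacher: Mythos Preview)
Your proposal addresses only point 4 of the theorem, and for that point your plan is essentially identical to the paper's: the paper simply says that point 4 follows by a computational test of the maximum domain from \cite{Leedham-Green2023condorcet}, and you have spelled out exactly that computation (with some sensible extra cross-checks). So there is no disagreement on point 4.

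The gap is that the stated theorem has four parts and you have treated it as if it concerned a single concrete object. Points 1, 2, and the first clause of point 3 are \emph{not} finite checks on one fixed domain; they are claims for all $n$ and all $k\leq n$ (respectively all $k<n/2$, all $k\leq\lceil n/2\rceil+1$), and they require structural arguments. The paper handles point 1 by observing that the restriction of a maximal Arrow single-peaked domain to any $k$-subset is again a maximal Arrow single-peaked domain, which by \cite{slinko2019condorcet} has size exactly $2^{k-1}$. Point 2 uses the analogous closure-under-restriction property of set-alternating domains together with the size lower bound from \cite{setalternating}. For point 3, the paper exhibits an explicit $k$-subset (even integers together with $1$ and, if odd, $n$) on which the restricted Fishburn domain uses a single never condition and hence, by \cite{raynaud1981paradoxical}, has size $2^{k-1}$; only the numerical tail of point 3 is verified by computer. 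None of this is covered by loading the $n=8$ maximum domain, so as written your proposal leaves three of the four parts unproved.
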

\begin{proof}
        Point 1 follows since the restriction of an Arrow's single-peaked domain to a subset of the alternatives is also an Arrow's single-peaked domain, and by \cite{slinko2019condorcet}  any maximal Arrow's single-peaked domain on $k$ alternatives has size $2^{k-1}$.

         Point 2  follows using the construction in \cite{setalternating} of set-alternating domains. It is shown that every such domain on $t$ alternatives has  size at least $2^{t-1}$, and the restriction of a set-alternating domain to a subset of the alternatives is also a set-alternating domain.

         The first part of point 3 follows by taking the restriction of the domain to the set of even integers together with 1 $n$, if the latter is odd.  This set has size $\lceil n/2\rceil+1$  and the restricted domain uses only the never condition $2N1$.  If $k$ is smaller than $\lceil n/2\rceil+1$ we further restrict to an arbitrary subset $A$ of size $k$.  By \cite{raynaud1981paradoxical} the restricted domain has size $2^{k-1}$, since it uses a single never condition. 

        The remainder of point 3 follows from a computational test of these domains, done independently in Mathematica and CDL.
         
         Point 4 follows by a computational test of the maximum domain from \cite{Leedham-Green2023condorcet}. 
\end{proof}

\begin{theorem}
    For $n\leq 6$ and $k=2,3,4,5,6$, Fishburn's alternating domains have the maximum possible abundance among all maximal Condorcet domains with a given number of alternatives.  For $n=7$, they do not reach the maximum possible for $k=6$.
\end{theorem}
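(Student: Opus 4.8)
The plan is to recast the statement as a finite computation over the catalogue of maximal Condorcet domains on at most seven alternatives, and then to carry that computation out. The starting point is the monotonicity remark following the definition of abundance: for a fixed $k$, adding orders can only increase the size of every restriction, so the largest abundance is always attained on a \emph{maximal} Condorcet domain. It therefore suffices to range over maximal domains, and for $n\le 7$ the complete list of these (up to isomorphism) is already available from \cite{akello2023condorcet}. Since abundance depends only on the multiset of restriction sizes, which is invariant under relabeling of the alternatives, one representative from each isomorphism class suffices; note also that Fishburn's alternating domains are themselves maximal, hence they appear in this catalogue.

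For each maximal domain $D$ in the catalogue, and each $k\in\{2,3,4,5,6\}$ with $k\le n$, I would compute the exact $(k,s)$-abundance directly: form the restriction $D_A$ of $D$ to each $k$-subset $A\subseteq[n]$ using the restriction routine of the CDL library, record its size $|D_A|$, and set $s=\min_A|D_A|$. Taking the maximum of these values over all $D$ gives, for each pair $(n,k)$, the largest abundance achievable by any maximal Condorcet domain. The abundance values of Fishburn's alternating domains are already tabulated in Theorem 1, so the proof reduces to comparing those tabulated values with the computed maxima. I stress that the asymptotic ceiling $2^{k-1}$ from the Ramsey section is of no help here, since for small $n$ it is exceeded (e.g. Fishburn's domain is $(5,19)$-abundant for $n=6$, well above $2^{4}=16$); the maxima must genuinely be found by enumeration.

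The two halves of the statement then have different logical shapes. The positive part ($n\le 6$) is universal: it asserts that Fishburn's value equals the maximum for \emph{every} $k$ simultaneously, so the sweep must visit the entire catalogue, and it is the completeness of the enumeration in \cite{akello2023condorcet} that makes the conclusion rigorous. The negative part ($n=7$, $k=6$) is merely existential: to show that Fishburn's $(6,42)$-abundance is not optimal it is enough to exhibit one maximal Condorcet domain on seven alternatives whose smallest six-element restriction has more than $42$ orders, and such a witness can simply be read off from the catalogue (or produced from another structured family) and its abundance verified directly.

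The main obstacle is computational rather than conceptual: the number of maximal Condorcet domains on seven alternatives is large, so the sweep must be organised efficiently, and, more importantly, any implementation bug would silently invalidate the universal $n\le 6$ claim. To control this I would run the entire pipeline twice, once in CDL and once in an independent Mathematica implementation, following the same cross-checking discipline used for point 3 of Theorem 1, and accept the result only when the two independent computations agree on every $(n,k)$.
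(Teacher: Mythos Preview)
Your proposal is correct and follows essentially the same approach as the paper: an exhaustive computation of abundances over the catalogue of maximal Condorcet domains from \cite{akello2023condorcet}. The paper's own proof is a single sentence stating exactly this, so your write-up simply makes explicit the monotonicity reduction to maximal domains, the isomorphism-invariance justification, and the cross-checking discipline that the paper leaves implicit.
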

\begin{proof}
    The abundances for all maximal Condorcet domains in this range were computed using the data from \cite{akello2023condorcet}.
\end{proof}
For  $n=7$, $n=8$, with $k=6$, there are maximal Condorcet domains with higher abundance than the Condorcet domains of maximum size.

Given two domains with large abundances, we have a natural construction for larger domains which inherit some abundances from the first domains.
\begin{definition}
    Given two domains $D_1$ and $D_2$ on the sets $A$ and $B$ respectively we define a domain $S(D_1,D2)$ on the set  $A\cup B$ as follows:  Given any order $u$ from $D_1$ and any order $v$ from $D_2$ the domain contains both the concatenation $uv$ and the concatenation $vu$.
\end{definition}
The domain $S(D_1,D2)$  has size $2|D_1||D_2|$ and cannot have low abundance when $D_1$ and $D_2$ do not.
\begin{prop}
    If $D_1$ and $D_2$ are $(k,2^{k-1})$-abundant for each $k\leq t$ then so $S(D_1,D_2)$. 
\end{prop}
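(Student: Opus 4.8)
The plan is to reduce the abundance of $S(D_1,D_2)$ on a $k$-subset directly to the abundances of $D_1$ and $D_2$ on the two ``halves'' of that subset. Fix any $k\le t$ and any $k$-element subset $S\subseteq A\cup B$, and split it as $S_A=S\cap A$ and $S_B=S\cap B$, writing $a=|S_A|$, $b=|S_B|$, so that $a+b=k$. The first step is to describe the restriction map explicitly: every order in $S(D_1,D_2)$ places all of $A$ before all of $B$ (the orders $uv$) or all of $B$ before all of $A$ (the orders $vu$), so restricting $uv$ to $S$ yields the concatenation $(u|_{S_A})(v|_{S_B})$, and restricting $vu$ yields $(v|_{S_B})(u|_{S_A})$. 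Hence the restriction of $S(D_1,D_2)$ to $S$ is exactly the set of concatenations $\alpha\beta$ together with the set of concatenations $\beta\alpha$, where $\alpha$ ranges over the restriction $D_1|_{S_A}$ and $\beta$ over the restriction $D_2|_{S_B}$.

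Next I would count these two families. The map $(\alpha,\beta)\mapsto\alpha\beta$ is injective, since $\alpha$ and $\beta$ can be recovered from $\alpha\beta$ as its restrictions to $S_A$ and $S_B$; the same holds for $(\alpha,\beta)\mapsto\beta\alpha$. Writing $P$ and $Q$ for the number of distinct orders in $D_1|_{S_A}$ and $D_2|_{S_B}$ respectively, each family therefore contains exactly $PQ$ orders. Applying the hypothesis to the subsets $S_A$ and $S_B$, which is legitimate because $a,b\le k\le t$, gives $P\ge 2^{a-1}$ and $Q\ge 2^{b-1}$.

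Finally I combine the two families. When both $a\ge1$ and $b\ge1$ the two families are disjoint, since an order $\alpha\beta$ begins with an element of $A$ whereas $\beta\alpha$ begins with an element of $B$; thus the restriction has exactly $2PQ\ge 2\cdot 2^{a-1}2^{b-1}=2^{a+b-1}=2^{k-1}$ orders, as required. The one point needing separate care, and the closest thing to an obstacle here, is the degenerate case where $S$ lies entirely in one part, say $b=0$ so that $S\subseteq A$: then $\alpha\beta$ and $\beta\alpha$ coincide and the factor $2$ is lost, but the restriction is simply $D_1|_S$ with $|S|=a=k$, whose size is at least $2^{k-1}$ directly by hypothesis, so the bound still holds (and symmetrically for $a=0$). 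Verifying the disjointness claim and disposing of these boundary cases is the only delicate bookkeeping; the heart of the argument is the single inequality $2PQ\ge 2^{k-1}$.
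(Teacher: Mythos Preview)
Your proof is correct and follows essentially the same approach as the paper: split the $k$-subset across $A$ and $B$, count concatenations in each order, and sum. You are more careful than the paper in explicitly verifying the injectivity of the concatenation map, the disjointness of the two families when both halves are nonempty, and the degenerate case where the subset lies entirely in one part; the paper simply asserts these steps without comment.
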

\begin{proof}
    This obviously holds for $k$-tuples which only contain alternatives from one of $A$ and $B$.  Let us assume that the tuple has elements $T_1$ from $A$ and $T_2$ from $B$. Then $S(D_1, D_2)$ contains both every concatenation of the, at least,  $2^{|T_1|-1}$ elements in the restriction of $D_1$ to $T_1$ with the, at least, $2^{|T_2|-1}$ elements in the restriction of $D_2$ to $T_2$,  and the same number of concatenations om the opposite order.

    This gives a total of at least $2\times2^{|T_1|-1}\times2^{|T_2|-1}=2^{k-1}$ linear orders in the restriction to this $k$-tuple.
\end{proof}

Given that abundance is a monotone-increasing property it is natural to ask how many linear orders we must have in order to achieve a specified abundance. This is equivalent to asking,  given a fixed $k$ how small can a \emph{minimal} $(k,s)$-abundant domain $D$ be?  Here minimal means that every subset of $D$ is a domain which is not $(k,s)$-abundant.
\begin{theorem}
We have the following existence results for small domains with given abundance: 
    \begin{enumerate}
        \item For every $k\geq 2$ there are $(k,2)$-abundant domains of size 2 for every $n\geq 2$.
        \item There exists $(3,3)$-abundant and $(3,4)$-abundant domains of size 3 and 4 respectively if and only if $3\leq n\leq 8$.  
        \item There exists single-crossing maximal Condorcet domains with size ${n \choose 2}+1$ which are $(3,4)$-abundant 
    \end{enumerate}
\end{theorem}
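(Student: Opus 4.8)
The plan is to treat the three parts separately; the substance is the sharp bound $n\le 8$ in part~2. Part~1 is immediate: the domain $\{12\cdots n,\ n(n-1)\cdots 1\}$ has size $2$, any two linear orders form a Condorcet domain, and on every $k$-subset with $k\ge2$ the two orders restrict to the increasing and the decreasing order, which are distinct, so it is $(k,2)$-abundant for all $n\ge 2$. For part~2, I first reformulate: a size-$4$ domain is $(3,4)$-abundant exactly when the four orders restrict to four \emph{distinct} suborders on every triple, i.e.\ no two of them agree on a whole triple. For two linear orders $\sigma,\tau$ let their \emph{agreement graph} have vertex set $[n]$ and an edge $\{a,b\}$ whenever $\sigma$ and $\tau$ rank $a,b$ alike; then $\sigma,\tau$ agree on a triple iff that triple spans a triangle, so $(3,4)$-abundance of a size-$4$ domain is exactly the statement that all $\binom{4}{2}=6$ agreement graphs are triangle-free. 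Deleting one order leaves three pairwise-triangle-free orders, which is precisely the $(3,3)$-abundant size-$3$ condition, and restriction to any subset of size $\ge3$ keeps the orders distinct on the triples it contains; so abundance descends to smaller $n$ and it suffices to prove $n\le 8$ and to exhibit a single size-$4$ example on $[8]$.

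For the bound I would prove the lemma that, after relabelling so that $\sigma$ is the identity, the agreement graph of $\sigma$ and $\tau$ is triangle-free iff $\tau$ has no increasing subsequence of length $3$; by Erd\H{o}s--Szekeres such a $\tau$ splits into two decreasing subsequences, and since within a decreasing subsequence every pair disagrees with $\sigma$, these subsequences are independent sets and the agreement graph is \emph{bipartite}. Now take any three of the orders. For each pair of alternatives the three orders either all rank it alike, in which case the edge lies in all three agreement graphs, or they split $2$--$1$, in which case it lies in exactly one; either way the three agreement graphs cover every edge of $K_n$. Fixing a bipartition of each of these three bipartite graphs and recording, for each vertex, the triple of sides it occupies gives an injection $[n]\to\{0,1\}^3$ (two vertices on the same side of all three bipartitions would span an uncovered edge), whence $n\le 2^3=8$. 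The same three-orders argument yields the ``only if'' direction for both the $(3,3)$- and the $(3,4)$-abundant cases.

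The remaining task, and the step I expect to be the main obstacle, is the tight construction at $n=8$: I would exhibit an explicit quadruple of orders on $[8]$ whose six pairwise agreement graphs are all triangle-free (equivalently, whose six pairwise quotient permutations all avoid the pattern $123$), and, if a Condorcet domain is wanted, check that every triple carries a never condition. All smaller $n$ then follow by restriction and the $(3,3)$ cases by deleting an order. Tightness is delicate because attaining $n=2^3$ forces the three separating bipartitions to realise all eight sign patterns, while the disagreement sets of genuine orders must obey the cocycle relation $D_{12}\,\triangle\,D_{13}=D_{23}$; this rules out the naive choice of one coordinate-bipartition per pair, so the extremal example is located by a short search, which we verify independently in CDL and Mathematica.

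For part~3 I would take the maximal single-crossing domain consisting of a maximal chain from $12\cdots n$ to its reverse in the weak Bruhat order on permutations: each of the $\binom{n}{2}$ pairs is inverted exactly once along the chain, so the domain has size $\binom{n}{2}+1$, and single-crossing domains are Condorcet domains. To see $(3,4)$-abundance, restrict to a triple: each of its three pairs crosses exactly once and only when the two alternatives are adjacent in the current order, so the restricted sequence of suborders is a maximal chain in the weak order on $S_3$ and therefore runs through exactly four distinct suborders.
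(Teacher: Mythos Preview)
Your proposal is correct, and for part~2 it takes a genuinely different route from the paper. The paper handles the ``only if'' direction by a computer search showing no $(3,3)$-abundant size-$3$ domain exists for $n=9$; you replace this by a clean combinatorial argument. Your key steps --- (i) two orders agree on a triple iff that triple is a triangle in their agreement graph, (ii) a triangle-free agreement graph is in fact bipartite (via the $123$-avoidance reformulation and the decomposition into two decreasing subsequences), and (iii) any three of the orders yield three bipartite agreement graphs whose union covers $K_n$, forcing an injection $[n]\hookrightarrow\{0,1\}^3$ --- are all sound and give a conceptual explanation of why the threshold is exactly $2^3=8$, something the paper's search does not provide. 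Your argument also shows, for free, that the bound is the same for the $(3,3)$ and $(3,4)$ cases. For the existence at $n=8$ you, like the paper, rely on an explicit example found by search; the paper supplies one, so your plan goes through. For part~3 the paper simply cites \cite{Slinko2021}, whereas you sketch a self-contained proof via maximal chains in the weak Bruhat order; your argument that the restriction to a triple walks through exactly four distinct suborders is correct. Part~1 is identical to the paper's proof.
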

\begin{proof}
    \begin{enumerate}
        \item Take the domain which consists of the natural order and its reverse.
        \item One such domain for $n=8$ is displayed in Figure \ref{fig:abundant} (rows represent linear orders). The first three orders form a $(3,3)$-abundant domain, and by deleting alternatives we reach domains for smaller $n$ with the same abundance. A computer search shows that no $(3,3)$-abundant domains exist for $n=9$.
        \item See \cite{Slinko2021} for an example of a maximal single-crossing domain that is copious and has size ${n \choose 2}+1$.
    \end{enumerate}
\end{proof}
\begin{figure}[h]
\centering
\begin{tabular}{cccccccc}
\toprule
1&2&3&4&5&6&7&8\\
\hline
4&3&2&1&8&7&6&5\\
\hline
6&5&8&7&2&1&4&3\\
\hline
7&8&5&6&3&4&1&2\\
\bottomrule
\end{tabular}
\caption{A $(3,4)$-abundant domain of size 4 for $n=8$}\label{fig:abundant}
\end{figure}

As the example in Figure \ref{fig:abundant} (each row is an order) shows it is possible to have a large abundance with a, perhaps, surprisingly small number of voters.  However, examples of the type given here are rare, for $n=8$ there is up to isomorphism just 7 such Condorcet domains. As a comparison, for a set of $t$ randomly chosen orders from the unrestricted domain with $n=8$ reaches expected abundance $(3,4)$ at $t=14$. It would be interesting to investigate how large the abundance in empirical domains typically is.

\subsection{Diversity inheritance}
Being abundant  for one set of parameters often implies that a domain is abundant for additional parameters. Using data from \cite{akello2023condorcet} one can  establish the  inheritance properties presented in Table \ref{tab:inhet}. Here  $(k_1, s_1) \rightarrow{} (k_2, s_2)$  means that every domain which is $(k_1, s_1)$-abundant is also  $(k_2, s_2)$-abundant.

\begin{table}[ht]
\centering
\caption{Implications for induced abundance}\label{tab:inhet}
\label{tab:induceabound}
\begin{tabular}{lll}
\toprule
 $(3, 4) \rightarrow{} (2, 2)$ & $(4, 9) \rightarrow{} (2, 2)$ & \\
 $(5, 19) \rightarrow{} (2, 2)$& $(6, 43) \rightarrow{} (2, 2)$ & \\
\midrule
$(4, 4) \rightarrow{}  (3, 1)$ & $(4, 5-6) \rightarrow{}  (3, 2)$ & $(4, 7-8) \rightarrow{}  (3, 3)$ \\
$(4, 9) \rightarrow{}  (3, 4)$ & &\\
\midrule
$(5, 12) \rightarrow{} (3, 1)$ & $(5, 13-16) \rightarrow{} (3, 2)$ & $(5, 17-18) \rightarrow{} (3, 3)$  \\
$(5, 19-20) \rightarrow{} (3, 4)$ & & \\
\midrule
$(6, 40) \rightarrow{} (3, 2)$ & $(6, 41-42) \rightarrow{} (3, 3)$ & $(6, 43) \rightarrow{} (3, 4)$ \\
\midrule
$(5, 5) \rightarrow{} (4, 1)$ & $(5, 6-8) \rightarrow{} (4, 2)$  & $(5, 9-11) \rightarrow{} (4, 3)$   \\
$(5, 12-13) \rightarrow{} (4, 4)$ & $(5, 14-15) \rightarrow{} (4, 5)$  & $(5, 16-17) \rightarrow{} (4, 6)$  \\
$(5, 18) \rightarrow{} (4, 7)$ & $(5, 19-20) \rightarrow{} (4, 8)$ &  \\
\midrule
$(6, 40-41) \rightarrow{} (4, 6)$&$(6, 42-43) \rightarrow{} (4, 7)$  &$(6, 44-45) \rightarrow{} (4, 8)$  \\
\midrule
$(6, 37) \rightarrow{} (5,13 )$ & $(6, 38) \rightarrow{} (5, 13)$& $(6, 39) \rightarrow{} (5, 14)$\\
$(6, 40) \rightarrow{} (5, 15)$ & $(6, 41) \rightarrow{} (5, 16)$ &$(6, 42) \rightarrow{} (5, 16)$ \\
$(6, 43) \rightarrow{} (5, 17)$ & $(6, 44) \rightarrow{} (5, 18)$    & $(6, 45) \rightarrow{} (5, 19)$ \\
\bottomrule
\end{tabular}
\end{table}
Here we point out that it is only for $k=3$ that we see an implication of the form $(k,2^{k-1}) \rightarrow{} (k-1,2^{k-2})$, corresponding to the known fact that being copious implies that the domain is also ample. This demonstrates that for $k\geq 4$ the abundance properties are genuinely distinct. As the table shows, the inherited abundance drops quickly when $k$ is increased.

\section{A note on maximality}
\label{sec:maximality}
In this section we will consider the restrictions of a maximal Condorcet domain to all subsets of alternatives of size $n-1$.  This is done for two reasons. First, this will give us information about how compatible the rankings of alternatives from different subsets are, and as we  will see there are domains where this is so context dependent that none of these restrictions are maximal domains. Second, domains which can be restricted to maximal subdomains are sometimes easier to treat computationally and some of the results in the next section will depend on this.

When considering the collection of restrictions to $n-1$ alternatives  there are in principle three cases. Either all restrictions to $n-1$ alternatives are maximal Condorcet domains, or there are both maximal and non-maximal restrictions, or all such restrictions are non-maximal Condorcet  domains. In the last case we have mutually incompatible preferences coming from all subsets of the alternatives.

Let us say that a Condorcet domain $D$ on $n$ alternatives is \emph{discordant} if $D$ is maximal  but  does not have a subdomain $D'$ on $n-1$ alternatives which is also a maximal Condorcet domain. By eliminating an alternative from a discordant domain we resolve conflicting never conditions, and we can find some orders that do not belong to the restriction, but satisfy all never conditions.

Note that the copiousness property of \cite{slinko2019condorcet} is equivalent to saying that every subdomain on $3$ alternatives  is maximal. Slinko gave examples of domains which are not copious, and many such examples were given in \cite{akello2023condorcet}. However, all the non-copious examples for small $n$ also have subdomains on 3 alternatives which are maximal, i.e. they have triples of both kinds.  In fact, using the data from \cite{akello2023condorcet} exhaustive testing shows that for $4\leq n\leq 5$ there are no discordant domains.  However, by extending this exhaustive search to the domains for $n=6, 7$ are found.

We find that for $n=6$ there are in total five discordant domains. One of these has size 4 and the other four have size 25. We display the example of size 4 domain in Figure \ref{fig:odd}. The restriction of this domain to each five-alternatives subset contains four orders, but the corresponding maximal Condorcet domain on five alternatives has eight orders.

\begin{figure}[h]
\centering
\begin{tabular}{cccccccc}
\toprule
1&2&3&4&5&6\\
\hline
3&6&2&5&1&4\\
\hline
4&1&5&2&6&3\\
\hline
6&5&4&3&2&1\\
\bottomrule
\end{tabular}
\caption{The smallest discordant Condorcet domain }\label{fig:odd}
\end{figure}

For $n=7$ we have a much wider range of domain sizes for which discordant domains exist.  In Table \ref{tab:disco} we show the number of non-isomorphic discordant Condorcet domains of each size for $n=7$. These domains can be downloaded from \cite{Web1}.  

\begin{table}[h]
\caption{The number of discordant domains and their sizes for $n=7$}\label{tab:disco}
\centering
\begin{tabular}{l|*{10}{c}}
\toprule
Size &12 & 14 & 15 & 16 & 17 & 18 & 19 & 20 & 21 & 22  \\
Count &2 & 4 & 2 & 20 & 14 & 8 & 66 & 48 & 48 & 24  \\
\hline
Size & 23 & 24 & 25 & 26 & 27 & 28 & 29 & 30 & 31 & 32 \\
Count & 8 & 28 & 32 & 38 & 26 & 38 & 32 & 69 & 36 & 89 \\
\hline
Size & 33 & 34 & 35 & 36 & 37 & 38 & 39 & 40 & 41 & 42 \\
Count     & 64 & 30 & 90 & 102& 78 & 70 & 70 & 304& 34 & 49 \\
\hline
Size & 43 & 44 & 45 & 46 & 47 & 48 & 49 & 50 & 51 & 52  \\
Count  & 44 & 95 &92 & 74 & 64 & 94 & 56 & 110 & 42 & 100  \\
\hline
Size& 53 &54  & 55 & 56 & 57 & 58 & 60 &&&\\
Count&38 & 78 & 10 & 74 & 4 & 48 & 15 &&&\\
\bottomrule
\end{tabular}

\end{table}

As we can see, discordant  domains exist for most small sizes for $n=7$ but do not exist for any size above 60.  For domains close to the maximum possible size it seems likely that restrictions will also remain large and we pose the following problem:
\begin{problem}\label{maxprob}
    Let $f(n)$ denote the maximum possible size for a Condorcet domain on $n$ alternatives.
    
    Is it true that every discordant Condorcet domain has size at most $0.8 f(n)$?
\end{problem}
The constant 0.8 in the problem is not intended to be sharp, but rather is a description of the intuitive picture that very large domains should be so diverse that this is true for their restrictions as well. From our data, we see that for $n\leq 7$ the problem would have a positive answer for a constant much smaller than 0.8.

\section{Ramsey properties of diversity}
\label{sec:ramsey}

From  e.g. Black's single-peaked domain we know that  $(k,2^{k-1})$  domains exist for all $n\geq k$. However, for a fixed $k\geq 4$ there are values of $s$ such that $(k,s)$-abundant domains exist only for a finite set of values of $n$. This connects to the observation in \cite{zhou2023new} that large enough domains should have subdomains of size six using only a single never condition.  We here provide a proof using the hypergraph version of Ramsey's theorem and also note that this implies the existence of a more specifically order-theoretic Ramsey property.

\begin{theorem}
    Given an integer $k$ there exists two numbers $R_s(k) \leq R_u(k)$ such that 
    \begin{itemize}
        \item For $n\geq R_s(k)$, no domain on $n$ alternatives is $(k,s)$-abundant for $s>2^{k-1}$.
        \item For $n\geq R_u(k)$, every domain on $n$ alternatives has a subset $A'$ of $k$ alternatives such that every triple in $A'$ satisfies the same never condition.
    \end{itemize}
\end{theorem}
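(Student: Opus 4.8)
The plan is to derive both statements from a single application of the hypergraph version of Ramsey's theorem, where the colours are the never conditions supplied by Sen's characterisation, and then to read off the size bound from the fact that a domain governed by one never condition cannot be large.

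First I would set up the colouring. Fix a Condorcet domain $D$ on $[n]$ (we may assume $D$ is unitary after relabelling, though this is not essential for the colouring). For each triple $\{a,b,c\}$ with $a<b<c$, Sen's theorem guarantees that the restriction of $D$ to $\{a,b,c\}$ satisfies at least one never condition; choosing one and recording it in the element/position (Fishburn) format $iNj$ with $i,j\in[3]$ gives a well-defined colour from a fixed palette of at most nine labels (at most six in the unitary case). I would then apply Ramsey's theorem with $r=3$ and these colours: setting $R_u(k)=R(9,3,k)$, any $n\ge R_u(k)$ forces a subset $A'\subseteq[n]$ with $|A'|=k$ all of whose triples carry the same colour. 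Since each colour is a never condition that the corresponding triple actually satisfies, every triple of $A'$ obeys one and the same never condition, which is exactly the second bullet.

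For the first bullet I would feed this monochromatic set into a size estimate. The restriction of $D$ to $A'$ is itself a Condorcet domain in which every triple satisfies a single fixed never condition, and by the theorem of Raynaud~\cite{raynaud1981paradoxical} already invoked in the proof of Theorem~1 such a domain has at most $2^{k-1}$ orders. Hence $D$ admits a $k$-subset whose restriction has size at most $2^{k-1}$, so $D$ cannot be $(k,s)$-abundant for any $s>2^{k-1}$. Taking $R_s(k)=R_u(k)$ yields the required pair with $R_s(k)\le R_u(k)$; since the size conclusion is weaker than the structural one, one could in principle hope to certify it for smaller $n$, which is why the two thresholds are stated separately.

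The main obstacle I anticipate is the size estimate rather than the Ramsey step, which is routine once the colouring is in place. The monochromatic colour can be any of the never conditions, and while the peak--pit conditions (such as $2N3$, realised by Black's single-peaked domain, or the $2N1$ used in Theorem~1) give maximal single-condition domains of size exactly $2^{k-1}$, one must check that the remaining types --- in particular the never-middle conditions $iN2$ --- do not produce a larger domain. A direct check for small $k$ indicates these types are in fact strictly smaller (for instance $2N2$ admits only the four orders $2143,2413,3142,3412$ when $k=4$), so the uniform bound $2^{k-1}$ holds; confirming it for every never-condition type and every $k$ is the delicate point, and it is precisely what makes $2^{k-1}$ the exact ceiling matching the single-peaked lower bound of Theorem~1.
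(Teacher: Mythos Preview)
Your proof follows essentially the same route as the paper's: colour triples by a never condition they satisfy, apply hypergraph Ramsey to obtain a monochromatic $k$-set, and then invoke Raynaud to bound the size of the restriction by $2^{k-1}$. Your residual worry about the never-middle types is unnecessary --- the paper simply cites Theorems~3 and~4 of \cite{raynaud1981paradoxical}, which cover \emph{every} never condition uniformly, so the $2^{k-1}$ ceiling holds without a case-by-case verification (and the only cosmetic difference is that the paper reduces to six colours by first passing to a unitary domain, whereas you use nine).
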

\begin{proof}
    Given a unitary Condorcet domain $C$ on $n$ alternatives we define an edge colouring of the complete 3-uniform hypergraph $K_n^{(3)}$ by using the name of the never condition on a triple as the colour of the corresponding edge of the hypergraph. This colouring uses at most 6 colours.

    By Ramsey's theorem for hypergraphs \cite{ramsey} there exists an integer $N$ such that if $n\geq N$ then there must exist a subset $A'$ of the vertices such that every triple has the same colour. This proves the existence of  $R_u(k)$.

    By Theorems 3 and 4 of \cite{raynaud1981paradoxical} a domain on $k$ alternatives in which every triple satisfies the same never condition has size $2^{k-1}$. This proves that $R_s(k)$ exists and is at most $R_u(k)$.
\end{proof}

The general theorem shows that $R_s(k) \leq R_u(k)$  but for small values of $k$ we find that $R_s(k)$ is typically much smaller than $R_u(k)$, which is equal to the corresponding Ramsey number. The latter is known to grow super-exponentially in $k$ \cite{MR2552253} and the only known exact value is that with two colours an $k=4$ we get $R=13$ \cite{MR1095846}.

We have computed the exact value of $R_s(k)$  for several small values of $k$.
\begin{theorem}
    The following bounds on the number of alternatives in a $(k,s)$-abundant domain hold:
    \begin{enumerate}
        \item $(4,9)$-abundant domains exist only for $n=4$.
        \item $(5,20)$ and $(5,19)$-abundant domains exist if and only $5\leq n\leq 6 $.
        \item $(5,18)$ and  $(5,17)$-abundant domains exist if and only if $5\leq n\leq 8$.
        \item $(6,43)$, $(6,44)$, and    $(6,45)$-abundant domains exist only for $n=6$.
        \item $(6,42)$-abundant domains exist  for $n=6,7$. If problem \ref{maxprob} has an affirmative answer, they do not exist for $n\geq 8$.
        \item $(6,41)$-abundant domains exist  for $n=6,7,8$. If problem \ref{maxprob} has an affirmative answer, they do not exist for $n\geq 9$.
    \end{enumerate}
\end{theorem}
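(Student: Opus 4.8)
The plan is to hang everything on a single monotonicity principle. If $D$ is a $(k,s)$-abundant Condorcet domain on $[n]$ and $A\subseteq[n]$ with $|A|\ge k$, then $D|_A$ is again a Condorcet domain (restrictions of Condorcet domains are Condorcet, since the never conditions on triples inside $A$ are inherited), and it is $(k,s)$-abundant because every $k$-subset $T\subseteq A$ satisfies $D|_A|_T=D|_T$, which has size at least $s$. Consequently the set of $n$ admitting a $(k,s)$-abundant Condorcet domain is an interval $[k,N]$ (empty when $s>f(k)$, where $f(n)$ is the maximum-domain-size function of Problem~\ref{maxprob}). This collapses each item into two tasks: exhibit a domain at the claimed right endpoint $N$ (all smaller $n$ then follow by restriction), and rule out $n=N+1$ (all larger $n$ then follow, since a domain on $n>N+1$ alternatives would restrict to one on $N+1$).

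For existence at the upper endpoints I would quote Theorem~1 and the catalogued maximal domains. Items~1 and 4 are trivial at $n=k$, where the single $k$-subset only requires size $f(k)\in\{9,45\}$, so the maximum domain works. For item~5 at $n=7$ and item~3 at $n=8$, Fishburn's alternating domain (giving $(6,42)$) and the maximum $n=8$ domain (giving $(5,18)$) from Theorem~1 already have the required abundance, and items' weaker thresholds follow. The two remaining endpoints, $(5,20)$ at $n=6$ and $(6,41)$ at $n=8$, are met neither by Fishburn's domains nor by the maximum-size domains, so I would exhibit the witnessing maximal domains found in the computational search; these are precisely the maximal domains of non-maximum size whose local diversity beats that of the largest domains, as announced just before the statement.

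The non-existence half splits according to the size of the critical value $N+1$. When $N+1\le 7$ (items 1, 2, 4) I would run the check over the complete classification of Condorcet domains on up to seven alternatives in \cite{akello2023condorcet}; since $9>2^{3}$, $19>2^{4}$ and $43>2^{5}$, the thresholds sit above the ceiling $2^{k-1}$ and the search returns nothing. The genuinely unconditional hard case is item~3 at $N+1=9$, where full enumeration is out of reach; here I would pass to the fully classified $6$-subsets. A $(5,17)$-abundant domain on $[9]$ assigns a never condition to each of the $84$ triples so that the induced profile on every $6$-subset matches that of some $(5,17)$-abundant Condorcet domain on six alternatives, and one shows via the Fishburn never-condition framework that no globally consistent assignment on nine alternatives has this property. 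This replaces an infinite statement by a finite constraint-satisfaction check over never-condition profiles.

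Items~5 and 6 are the conditional cases, and they carry the main obstacle. Here $N+1\in\{8,9\}$ lies beyond complete enumeration and the never-condition reduction seems inconclusive, so I would route the argument through Problem~\ref{maxprob}. Taking a maximal $(6,42)$- (respectively $(6,41)$-) abundant domain $D$, the crux is a size lower bound showing $|D|>0.8\,f(n)$; granting Problem~\ref{maxprob}, $D$ cannot then be discordant, so it has a maximal subdomain on $n-1$ alternatives that is again maximal and equally abundant. Iterating this descent, using the size bound at each step to exclude discordance, reaches the classified seven-alternative case, where one verifies that the resulting tower of lifts cannot realize the required abundance. I expect the size lower bound to be the real difficulty: the trivial bound $|D|\ge 42$ from a single restriction is far weaker than the threshold $0.8\,f(n)$, so this step must be extracted either from the structure of the abundant maximal domains on seven alternatives or from a dedicated counting/computational estimate, and it is exactly this gap that forces the $k=6$, $n\ge 8$ conclusions to remain conditional on Problem~\ref{maxprob}.
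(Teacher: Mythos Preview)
Your reduction via restriction-monotonicity to an interval $[k,N]$, with an existence witness at $N$ and a non-existence check at $N+1$, matches the paper's implicit framework, and for items 1, 2 and 4 your non-existence checks proceed just as the paper's do: exhaustive verification against the catalogue of maximal Condorcet domains for $n\le 7$ from \cite{akello2023condorcet}, augmented by the $n=8$ maximum domain from \cite{Leedham-Green2023condorcet}.

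For the unconditional $n=9$ case in item~3 you take a genuinely different route. You propose a constraint-satisfaction reduction over never-condition assignments on the $84$ triples of $[9]$, requiring each $6$-subset to match the never-condition profile of some $(5,17)$-abundant domain on six alternatives, and then showing no globally consistent assignment exists. The paper does nothing of this kind; it simply runs a direct CDL search at $n=9$. Your reduction may well be feasible, but it is not the paper's argument, and it is not obviously easier than the brute-force search actually carried out.

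For items 5 and 6 your top-down descent---prove $|D|>0.8\,f(n)$ for the hypothetical abundant domain, invoke Problem~\ref{maxprob} to exclude discordance, restrict, iterate down to $n=7$---differs in orientation from the paper's method. The paper works bottom-up and entirely computationally: it lists all maximal $(6,42)$-abundant (respectively $(6,41)$-abundant) domains at $n=7$ from the catalogue, records that they all have size at least $92$ (respectively $91$), then uses CDL to compute all extensions of these domains to $n=8$ (and, for item~6, onward to $n=9$), and verifies directly that none of the extensions retain the required abundance. Problem~\ref{maxprob} enters only to certify that this extension-from-a-maximal-subdomain procedure is exhaustive, i.e.\ that no discordant domain at $n=8$ or $n=9$ escapes the search; the paper does not attempt a separate size lower bound for a hypothetical abundant domain at $n=N+1$. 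The step you flag as the ``real difficulty'' is therefore not resolved by an independent counting or structural argument in the paper; it is absorbed into the conditional hypothesis on Problem~\ref{maxprob} together with the concrete size data recorded at $n=7$.
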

\begin{proof}
    Points 1 to 4 follow by exhaustively checking the abundance for the maximal CDs generated in \cite{akello2023condorcet} and the maximum CD from \cite{Leedham-Green2023condorcet}, followed by a search for $n=9$ using CDL. 

    For point 5, the existence of $(6,42)$-abundant domains again follows by analysing the data set from \cite{akello2023condorcet}. From that data we found all maximal  $(6,42)$-abundant domains for $n=7$,  all these domains had size 92 or more. Using CDL we found all extensions of these domains to $n=8$ and checked that none were  $(6,42)$-abundant. 

    For point 6,  the existence of $(6,41)$-abundant domains for $n=6,7$ again follows by analysing the data set from \cite{akello2023condorcet}. From that data, we found all maximal  $(6,41)$-abundant domains for $n=7$,  all these domains had size 91  or more. Using CDL we found all extensions of these domains to $n=8$, where examples of such domains were found. These were in turn extended to $n=9$  checked that none were  $(6,41)$-abundant. 
\end{proof}
Abundance $(5,18)$ is achieved for $n=8$ by the maximum domain from \cite{akello2023condorcet},  and for $n=7$ an example of size 96 can be constructed as a subdomain of the $n=8$ domain. 

Abundance $(6,42)$ for $n=7$ is achieved by Fishburn's alternating domain.

Abundance $(6,41)$ is achieved for $n=8$ by two maximal domains of size 219.  The first of these can be viewed as a variation of Fishburn's alternating domain where the never conditions on four triples have been changed as follows: ((1, 2, 4),1N3),((1, 2, 6),1N3), ((3, 7, 8),3N1),((5, 7, 8),3N1).


\section{Abundance based diversity indices}
\label{sec:abundance}

Our next aim is to connect the local diversity in the form of abundance to the study of diversity indices, as outlined by Hashemi and Endriss \cite{Hashemi2014}. 

We define the \emph{abundance vector}  $\mathcal{A}(D)$ of a domain $D$ to the vector in which entry number $k$ is the $s$ such that $D$ is exactly $(k,s)$-abundant.  Here position $k=1$ will be $1$ for every non-empty domain.

We say that $D_1$ is ranked higher than $D_2$ if $\mathcal{A}(D_1)$  is lexicographically larger than $\mathcal{A}(D_2)$ 

This diversity order emphasises diversity according to set size, so that the smallest sets are the most significant.  This is often the prioritisation since two domains can easily have the same, large, size without having the same abundance even for $k=2$.

Hashemi and Endriss \cite{Hashemi2014} considered, among others, two basic diversity indices. The \emph{simple support-based index} $\Delta_{supp}(D)=|D|-1$  and the \emph{support-based index} $\Delta_{supp}^k=|\cup_{A\subset X, |A|=k}Supp(D(A))|-{n \choose k}$.
Given a fixed integer $k$ we can similarly define a simple diversity index  by taking the  largest $s$  such that $Supp(\mathbf{R})=D$ is $(k,s)$-abundant. 
This index will rank profiles, and domains,  specifically according to their diversity on the restriction to $k$-sets. Here $\Delta_k(D)= s$ if $D$ is exactly $(k,s)$-abundant.

Recall that the binary entropy function is given by $$H(x)=-x\ln(x)-(1-x)\ln(1-x)$$
If two profiles have the same abundance index $\Delta_k(\mathbf{R}_1)=\Delta_k(\mathbf{R}_2)$ we can instead use the entropy version of the abundance index $\Delta_k(D)^e=\min_{A\subset X, |A|=k} H(D(A))$, where $H(D(A))$ is the entropy of the census counts $f_i/N$. Note that $H(D(A))$ is maximized if the profile is uniform over the preferences in $A$ and minimized if as many agents as possible use  a single order from $D(A)$. This index can either be seen as measuring how uniform the agent's preferences are on $A$ or as a proxy for how robustly $(k,s)$-abundant the profile is. Here the robustness refers to how many preference orders would one have to change in order to reduce the abundance for $k$-subsets.

The highest entropy is seen for maximal group-separable domains. For each pair of alternatives each group-separable domain is partitioned on two equal-sized parts with different preferences over the pair of alternatives. For each subset of alternatives of size $k$, each group-separable domain is partitioned on $2^{k-1}$ equal-sized parts with different preferences over the subset alternatives.

A less egalitarian alternative for comparing domains with equal abundance vector is to use the sum of the abundances for all $k$-subsets: $\sum_{|A|=k} |D(A)|$  
or, for a profile,  the sum of the entropies $\sum_{|A|=k} |H(D(A))|$.

\subsection{Experimental analysis}
In order to demonstrate the concepts developed in this paper we applied them to both profiles based on empirical data and profiles based on sampling from different types of known domains, both Condorcet and not.

Our empirical  data comes from the AGH data set\footnote{PrefLib. \url{https://www.preflib.org/dataset/00009}}  from \cite{MaWa13a}. This is a data set based on the choices of courses for a group of students during two consecutive years. There were 9  available courses in 2003 and 7 in 2004. The first year had 123 students and the second 70. Neither year leads to a CD. This data set was also used for simulations 
in \cite{Hashemi2014}.

We first investigate the empirical data as it is. The support is not a Condorcet domain for either of the two years. Using the full data set and all alternatives the two years give abundance vectors 
$$(1,1,2,6,16,39,75,101),$$
$$(1,1,2,6,16,39).$$
The domains are non-ample due to a single course which every student takes.  If  we delete that course both years lead to an ample domain.
If we delete the mandatory course and restrict to the 5 most popular courses we get the vectors
$$(1,2,5,15,31),$$
$$(1,2,6,16,32).$$
Here we see that despite the first profile having more students, 123 vs 70, it is the second one which has higher abundance.

We have also made a study for different kinds of randomly sampled profiles, similar to that in  \cite{Hashemi2014}.  We generate random samples from six different models, and in each case we generate domains with $n=7$ alternatives and $N=50$ agents. In the first model we pick preferences uniformly at random from Black's single-peaked domain for $n=5$. In the second model we pick orders from the 2004 AGH data restricted by removing alternatives 8 and 9. In the third model we pick orders from the unrestricted domain for $n=7$, a probabilistic model known at the Impartial culture. In the next models we pick preferences from Fishburn's alternating domain, a group-separable/symmetric domain, and a (5,18)-abundant domain of size 96. 

In Figure \ref{fig1} we show the empirical distribution for the abundance when 2000 domains are sampled uniformly from each model.  For both $k=3$ and $k=4$ we see that the unrestricted domain leads to the largest expected abundance, and the single-peaked domain with given axis to the lowest. Among Condorcet domains the group-separable domains (caterpillar group-separable, see \cite{Faliszewski2022}) give the highest expected abundance, except for $k=6$. In this domain different suborders emerge with  equal probabilities, and we have high abundance with a higher probability, than in domains in which diversity is based on several rare orders. For $k=6$, (5,18)-abundant domain surpasses the group-separable domain. It confirms the superiority of the investigated domain.

The data-based model has higher diversity than Condorcet domains, and since the support is not a Condorcet domain one might expect more varied preferences there.

Most of the  abundance  distributions are unimodal or bell-type for the chosen parameters. However, for each size $k$ we will see the abundance concentrating on the maximum possible value for the underlying domain  when the number of agents is large enough. In particular the classical Coupon collector problem shows that if the support is $(k,s)$-abundant then a uniformly sampled profile will typically also be that if the number of agents substantially exceeds $s \log{s}$.  However, for a well-structured support one might see such concentration already for smaller numbers of agents.

The abundance (diversity) axis is an additional dimension which can be used to distinguish statistical cultures (see the survey \cite{AAMAS20} about statistical cultures). Other such distinguishing indices have been studied  but unlike  e.g.  the   distance-based indices utilized in \cite{Faliszewski_diversity}  abundance calculation is quite simple, as long as $k$ is moderate in size.

\begin{figure}[ht]
\centering
	\begin{subfigure}{0.4\linewidth}
	 \includegraphics[width=\textwidth]{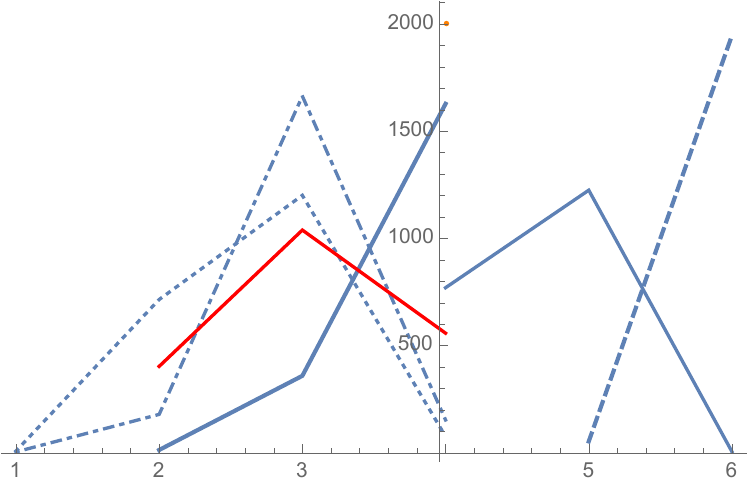}
	\caption{$k=3$}
	\end{subfigure}
  \hfill
	\begin{subfigure}{0.4\linewidth}
	 \includegraphics[width=\textwidth]{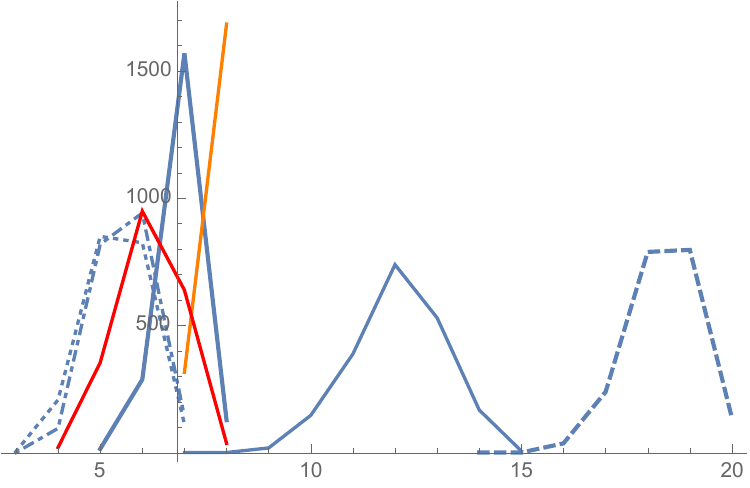}
    \caption{$k=4$}
	\end{subfigure}
  \hfill
	\begin{subfigure}{0.4\linewidth}
	 \includegraphics[width=\textwidth]{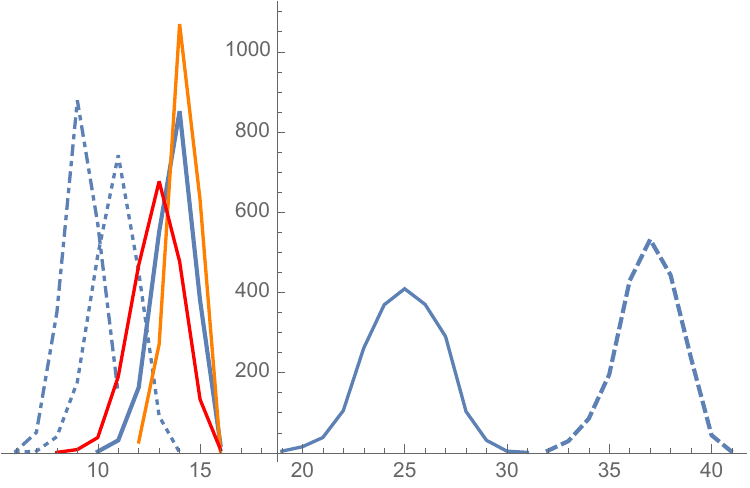}
	\caption{$k=5$}
	\end{subfigure}
  \hfill
	\begin{subfigure}{0.4\linewidth}
	 \includegraphics[width=\textwidth]{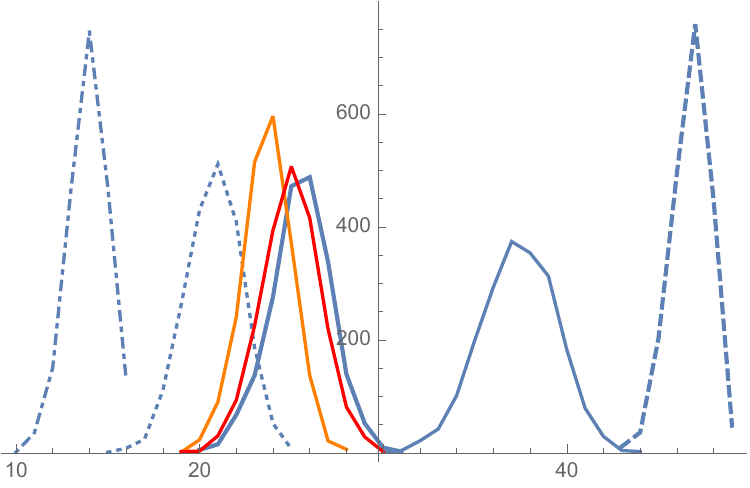}
\caption{$k=6$}
	\end{subfigure}
 \caption{Frequencies for given abundances. Dotted is Black's single-peaked domain, solid is 2004 AGH, dot-dashed is single-crossing, the thick solid line is Fishburn, the orange line is the symmetric/group-separable domain, red is the (5,18)-abundant domain,  and dashed is the unrestricted domain/Impartial culture.}\label{fig1}
\end{figure}

\section{Conclusion}
\label{sec:conclusion}

This paper links the study of local diversity to Condorcet domains. For a given number of alternatives $n$, we determine the Condorcet domains that exhibit the highest possible diversity for their least diverse restriction. The results are valid for an unlimited number $N$ of agents. For $n\geq7$, several of these domains are novel and have not been previously explored, while for smaller values of $n$, Fishburn's alternating domains serve as examples of maximum diversity. Notably, in several instances, the domains with the highest local diversity are not the ones with the greatest number of preference orders, underlining that maximizing the number of preference orders and maximizing diversity are two separate challenges for Condorcet domains.

We have shown that for restrictions to subsets of a fixed size $k$, the abundance cannot be higher than $(k,2^{k-1})$ if $n$ is large enough. This demonstrates that even though Condorcet domains can be much larger than Black's single-peaked domain, their abundance for small subsets cannot be higher than that of Black's single-peaked domain, when the number of alternatives is large. 

\section*{Acknowledgement}
Bei Zhou was funded by the China Scholarship Council (CSC). The Basic Research Program of the National Research University Higher School of Economics partially supported Alexander Karpov. This research utilised Queen Mary's Apocrita HPC facility, supported by QMUL Research-IT. 

\section*{Compliance with Ethical Standards}
\textbf{Conflict of interest} The authors have no relevant financial or non-financial interests that could potentially compromise the integrity, impartiality, or credibility of the research presented in this paper.

\bibliographystyle{econ}
\ifx\undefined\bysame
\newcommand{\bysame}{\hskip.3em \leavevmode\rule[.5ex]{3em}{.3pt}\hskip0.5em}
\fi

\end{document}